\newcommand{\hide}[1]{}
\newcommand{\xhdr}[1]{\vspace{1.7mm}\noindent{{\bf #1.}}}
\newcommand{\casc}{{\mathbf{t}}}
\newcommand{\alphs}{{\mathbf{A}}}
\newtheorem{theorem}{Theorem}
\newtheorem{corollary}[theorem]{Corollary}
\newcommand{\netinf}{{\textsc{Net\-Inf}}\xspace}
\newcommand{\netrate}{{\textsc{Net\-Rate}}\xspace}
\newcommand{\connie}{{\textsc{Co\-nNIe}}\xspace}
\newcommand{\expo}{{\textsc{Exp}}\xspace}
\newcommand{\pow}{{\textsc{Pow}}\xspace}
\newcommand{\ray}{{\textsc{Ray}}\xspace}
\newcommand{\eg}{\emph{e.g.}}
\newcommand{\ie}{\emph{i.e.}}
\newcommand{\unobs}{{\infty}}
\icmltitlerunning{Uncovering the Temporal Dynamics of Diffusion Networks}
\begin{document}

\twocolumn[

\icmltitle{Uncovering the Temporal Dynamics of Diffusion Networks}

\icmlauthor{Manuel Gomez-Rodriguez$^{1,2}$}{manuelgr@stanford.edu}

\icmlauthor{David Balduzzi$^1$}{balduzzi@tuebingen.mpg.de}

\icmlauthor{Bernhard Sch\"{o}lkopf$^1$}{bs@tuebingen.mpg.de}

\icmladdress{$^1$MPI for Intelligent Systems and $^2$Stanford University} 

\icmlkeywords{inferring networks, diffusion}

\vskip 0.3in
]

\begin{abstract}
Time plays an essential role in the diffusion of information, influence and disease over networks. In many cases we only observe when a node copies information, makes a decision or becomes infected -- but the connectivity, transmission rates between nodes and transmission sources are unknown. Inferring the underlying dynamics is of outstanding interest since it enables forecasting, influencing and retarding infections, broadly construed.
To this end, we model diffusion processes as discrete networks of continuous temporal processes occurring at different rates. Given cascade data -- observed infection times of nodes -- we infer the edges of the global diffusion network and estimate the transmission rates of each edge that best explain the observed data. 
The optimization problem is convex. The model naturally (without heuristics) imposes sparse solutions and requires no parameter tuning. The problem decouples into a collection of independent smaller problems, thus scaling easily to networks on the order of hundreds of thousands of nodes. Experiments on real and synthetic data show that our algorithm both recovers the edges of diffusion networks and accurately estimates their transmission rates from cascade data.

\end{abstract}

\section{Introduction}
\label{sec:intro}
Diffusion and propagation processes have received increasing attention in a broad range of domains: information propagation~\citep{adar05epidemics, manuel10netinf, meyers10netinf}, social networks~\citep{kempe03maximizing, lappas2010finding}, viral marketing~\citep{dodds07influentials} and epidemiology~\citep{wallinga04epidemic}. 

Observing a diffusion process often reduces to noting when nodes (people, blogs, etc.) reproduce a piece of information, get infected by a virus, or buy a product. Epi\-de\-mio\-lo\-gists can observe when a person becomes ill but they cannot tell who infected her or how many exposures and how much time was necessary for the infection to take hold. In information propagation, we observe when a blog mentions a piece of information. However if, as is often the case, the blogger does not link to her source, we do not know where she acquired the information or how long it took her to post it. Finally, viral marketers can track when customers buy products or subscribe to services, but typically cannot observe who influenced customers' decisions, how long they took to make up their minds, or when they passed recommendations on to other customers. In all these scenarios, we observe \emph{where and when} but \emph{not how or why} information (be it in the form of a virus, a meme, or a decision) propagates through a population of individuals. The mechanism underlying the process is hidden. However, the mechanism is of outstanding interest in all three cases, since understanding diffusion is necessary for stopping infections, predicting meme propagation, or maximizing sales of a product. 

This article presents a method for inferring the mechanisms underlying diffusion processes based on observed infections. To achieve this aim, we construct a model incorporating some basic assumptions about the spatiotemporal structures that generate diffusion processes. The assumptions are as follows. First, diffusion processes occur over static (fixed) but unknown networks (directed graphs). Second, infections are binary, \ie, a node is either infected or it is not; we do not model partial infections or the partial propagation of information. Third, infections along edges of the network occur independently of each other. Fourth, an infection can occur at different times: the likelihood of node $a$ infecting node $b$ at time $t$ is modeled via a probability density function depending on $a$, $b$ and $t$. Finally, we observe \emph{all} infections occurring in the network during the recorded time window. Our aim is to infer the connectivity of the network and the likelihood of infections across its edges after observing the times at which nodes in the network become infected.




In more detail, we formulate a generative probabilistic model of diffusion that aims to describe realistically how infections occur over time in a static network. Finding the optimal network and transmission rates maximizing the likelihood of an observed set of infection cascades reduces to solving a convex program. The convex problem decouples into many smaller problems, allowing for natural parallelization so that our algorithm scales to networks with hundreds of thousands of nodes. We show the effectiveness of our method by reconstructing the connectivity and continuous temporal dynamics of synthetic and real networks using cascade data. 

\xhdr{Related work}
The work most closely related to ours~\citep{manuel10netinf, meyers10netinf} also uses a generative probabilistic model for inferring diffusion networks. \citet{manuel10netinf} (\netinf) infers network connectivity using submodular optimization and \citet{meyers10netinf} (\connie) infer not only the connectivity but also a prior probability of infection for every edge using a convex program and some heuristics. 
However, both papers force the transmission rate between all nodes to be fixed -- and not inferred. In contrast, our model allows transmission at different rates across different edges so that we can infer temporally heterogeneous interactions within a network, as found in real-world examples. Thus, we can now infer the temporal dynamics of the underlying network.

The main innovation of this paper is to model diffusion as a spatially discrete network of continuous, conditionally independent temporal processes occurring at different rates. 
Infection transmission depends on the complex intricacies of the underlying mechanisms (\eg, a person's susceptibility to viral infections depends on weather, diet, age, stress levels, prior exposures to similar pathogens and so on).  We avoid modeling the mechanisms underlying individual infections, and instead develop a data-driven approach, suitable for large-scale analyses, that infers the diffusion process using only the visible spatiotemporal traces (cascades) it generates. We therefore model diffusion using only time-dependent pairwise transmission likelihood between pairs of nodes, transmission rates and infection times, but not prior probabilities of infection that depend on unknown external factors. 
To the best of our knowledge, continuous temporal dynamics of diffusion networks has not been modeled or inferred in previous work. We believe this is a key point for understanding diffusion processes.
%

\section{Problem formulation}
\label{sec:formulation}
This paper develops a method for inferring the spatiotemporal dynamics that generate observed infections. In this section we formulate our model, starting from the data it is designed for, and concluding with a precise statement of the network inference problem.
\begin{table*}[t]
    \caption{}
    \label{tab:models}
    \vskip 0.1in
    \begin{center}
    \begin{small}
    \begin{tabular*}{\textwidth}{@{\extracolsep{\fill}} l  c l c c}
	\toprule
        \multirow{2}{*}{\textbf{Model}}
		& \multicolumn{2}{c}{\textbf{Transmission likelihood}} & \textbf{Log survival function} & \textbf{Hazard function} \vspace{1mm} \\ 
		& \multicolumn{2}{c}{$f(t_i|t_j;\alpha_{j,i})$} & $\log S(t_i|t_j;\alpha_{j,i})$ & $H(t_i|t_j;\alpha_{j,i})$ \\
	\midrule
	Exponential (\expo) & $\left\{
	\begin{array}{l l}
    	\alpha_{j,i}\cdot e^{-\alpha_{j,i} (t_i-t_j)} \\
 		0
	\end{array}\right.$ &
	\hspace{-8mm}
	$\begin{array}{l}
		\text{if $t_j < t_i$}\\
		\text{otherwise}
	\end{array}$
	& $-\alpha_{j,i} (t_i-t_j)$ & $\alpha_{j,i}$ \\
	\midrule
	Power law (\pow) & $\left\{
	\begin{array}{l}
    	\frac{\alpha_{j,i}}{\delta} \left(\frac{t_i-t_j}{\delta}\right)^{-1-\alpha_{j,i}}\\
		0
	\end{array}\right.$ &
	\hspace{-8mm}
	$\begin{array}{l}
		\text{if $t_j+\delta < t_i$}\\
		\text{otherwise}\\
	\end{array}$
	& $-\alpha_{j,i} \log\left(\frac{t_i-t_j}{\delta}\right)$ & $\alpha_{j,i}\cdot\frac{1}{t_i-t_j}$ \\
	\midrule
	Rayleigh (\ray) & $\left\{
	\begin{array}{l}
    	\alpha_{j,i} (t_i-t_j) e^{-\frac{1}{2} \alpha_{j,i} (t_i-t_j)^2}\\
		0
	\end{array}\right.$ &
	\hspace{-8mm}
	$\begin{array}{l}
		\text{if $t_j < t_i$}\\
		\text{otherwise}
	\end{array}$
	& $-\alpha_{j,i} \frac{(t_i-t_j)^2}{2}$ & $\alpha_{j,i} \cdot (t_i-t_j)$\\
	\bottomrule
    \end{tabular*}
    \end{small}
    \end{center}
    \vskip -0.1in
\end{table*}

\xhdr{Data}
Observations are recorded on a fixed population of $N$ nodes and consist of a set $C$ of cascades $\{\casc^1,\ldots,\casc^{|C|}\}$. Each cascade $\casc^c$ is a record of observed infection times within the population during a time interval of length $T^c$. A cascade is an $N$-dimensional vector $\casc^c:=(t^c_1,\ldots,t^c_N)$ recording when nodes are infected, $t^c_k\in [0,T^c]\cup\{\unobs\}$. Symbol $\unobs$ labels nodes that are not infected during observation window $[0,T^c]$ -- it does not imply that nodes are never infected. The `clock' is reset to 0 at the start of each cascade. Lengthening the observation window $T^c$ increases the number of observed infections within a cascade $c$ and results in a more representative sample of the underlying dynamics. However, these advantages must be weighed against the cost of observing for longer periods. For simplicity we assume $T^c = T$ for all cascades; the results generalize trivially.

The time-stamps assigned to nodes by a cascade induce the structure of a directed acyclic graph (DAG) on the network (which is \emph{not} acyclic in general) by defining
node $i$ is a parent of $j$ if $t_i < t_j$. Thus, it is meaningful to refer to parents and children within a cascade, but not on the network. The DAG structure dramatically 
simplifies the computational complexity of the inference problem. Also, since the underlying network is inferred from many cascades (each of which imposes its 
own DAG structure), the inferred network is typically not a DAG.

\xhdr{Pairwise transmission likelihood}
The first step in modeling diffusion dynamics is to consider pairwise interactions. We assume that infections can occur at different rates over different edges of a network, and aim to infer the transmission rates between pairs of nodes in the network.

Define $f(t_i | t_j, \alpha_{j,i})$ as the conditional likelihood of transmission between a node $j$ and node $i$.  The transmission likelihood depends on the infection 
times $(t_j, t_i)$ and a pairwise transmission rate $\alpha_{j,i}$. A node cannot be infected by a node infected later in time. In other words, a node $j$ that has been
infected at a time $t_j$ may infect a node $i$ at a time $t_i$ only if $t_j < t_i$. Although in some scenarios it may be possible to estimate a non-parametric likelihood
empirically, for simplicity we consider three well-known parametric models: exponential, power-law and Rayleigh (see Table~\ref{tab:models}). Transmission rates are denoted as $\alpha_{j,i}\geq0$ and $\delta$ is the minimum allowed time difference in the power-law to have a bounded likelihood. As $\alpha_{j,i} \rightarrow 0$ the likelihood of infection tends to zero and the expected transmission time becomes arbitrarily long. Without loss of generality, we consider $\delta = 1$ in the power-law model from now on.

Exponential and power-laws are monotonic models that have been previously used in modeling diffusion networks and social networks~\citep{manuel10netinf, meyers10netinf}. 
Power-laws model infections with long-tails. The Rayleigh model is a non-monotonic parametric model previously used in epidemiology~\citep{wallinga04epidemic}. It is well-adapted to modeling fads, where infection likelihood rises to a peak and then drops extremely rapidly.

We recall some additional standard notation~\cite{lawless1982statistical}. The cumulative density function, denoted $F(t_i | t_j ; \alpha_{j,i})$, is computed from the transmission likelihoods. Given that node $j$ was infected at time $t_j$, the \emph{survival function} of edge $j\rightarrow i$ is the probability that node $i$ is \emph{not} infected by node $j$ by time $t_i$:
\begin{equation*}
	S(t_i | t_j ; \alpha_{j,i}) = 1-F(t_i | t_j ; \alpha_{j,i}).
\end{equation*}
The \emph{hazard function}, or instantaneous infection rate, of edge $j\rightarrow i$ is the ratio
\begin{equation*}
	H(t_i | t_j ; \alpha_{j,i}) = \frac{f(t_i | t_j ; \alpha_{j,i})}{S(t_i | t_j ; \alpha_{j,i})}.
\end{equation*}
The hazard functions of our models are simple, Table~\ref{tab:models}.

\xhdr{Probability of survival given a cascade}
We compute the probability that a node survives uninfected until time $T$, given that some of its parents are already infected. 
Consider a cascade $\casc:=(t_1, \ldots, t_N)$ and a node $i$ \emph{not} infected during the observation window, $t_i >T$. Since each 
infected node $k$ may infect $i$ independently, the probability that nodes $1\ldots N$ do \emph{not} infect node 
$i$ by time $T$ is the product of the survival functions of the infected nodes $1 \ldots N | t_k \leq T$ targeting $i$,
\begin{equation} \label{eq:prob1_F}
	\prod_{t_k \leq T} S(T | t_k ; \alpha_{k,i}).
\end{equation}

\xhdr{Likelihood of a cascade}
Consider a cascade $\casc:=(t_1, \ldots, t_N)$. We first compute the likelihood of the observed infections $\casc^{\leq T}=(t_1,\ldots,t_N|t_i\leq T)$. 
Since we assume infections are conditionally independent given the parents of the infected nodes, the likelihood factorizes over nodes as
\begin{equation}
	\label{e:factor}
	f(\casc^{\leq T}; \alphs)=\prod_{t_i\leq T} f(t_i|t_1,\ldots,t_N\setminus t_i ; \alphs).
\end{equation}
Computing the likelihood of a cascade thus reduces to computing the conditional likelihood of the infection time of each node given the rest of the cascade. As 
in the independent cascade model~\citep{kempe03maximizing}, we assume that a node gets infected once the \emph{first} parent infects the node. Given 
an infected node $i$, we compute the likelihood of a potential parent $j$ to be the first parent by applying Eq.~\ref{eq:prob1_F},
\begin{equation}
	f(t_i | t_j ; \alpha_{j,i}) \times \prod_{j \neq k, t_k < t_i} S(t_i | t_k ; \alpha_{k,i}).
\end{equation}
We now compute the conditional likelihoods of Eq.~\ref{e:factor} by summing over the likelihoods of the mutually disjoint events that each potential 
parent is the first parent,
\begin{multline}
	f(t_i | t_1, \ldots, t_N \setminus t_i ; \alphs) = \sum_{j : t_j < t_i} f(t_i | t_j ; \alpha_{j,i}) \times \\
	\prod_{j \neq k, t_k < t_i} S(t_i | t_k ; \alpha_{k,i}).
\end{multline}
By Eq.~\ref{e:factor} the likelihood of the infections in a cascade is
\begin{multline}
	f(\casc^{\leq T} ; \alphs) = \prod_{t_i\leq T} \sum_{j : t_j < t_i}
	f(t_i | t_j ; \alpha_{j,i}) \times \\
	\prod_{k : t_k < t_i, k \neq j} S(t_i | t_k ; \alpha_{k,i}).
\end{multline}
Removing the condition $k \neq j$ makes the product independent of $j$,
\begin{multline} \label{eq:casc}
	f(\casc^{\leq T}; \alphs) = \prod_{t_i\leq T} \prod_{k : t_k < t_i}
	S(t_i | t_k ; \alpha_{k,i}) \times \\
	\sum_{j : t_j < t_i} \frac{f(t_i | t_j ; \alpha_{j,i})}{S(t_i | t_j ; \alpha_{j,i})}.
\end{multline}
Eq.~\ref{eq:casc} only considers infected nodes. However, the fact that some nodes are \emph{not} infected during the observation window is also informative. We therefore add the multiplicative survival term from Eq.~\ref{eq:prob1_F} and also replace the ratios in Eq.~\ref{eq:casc} with hazard functions:
\begin{multline} \label{eq:loglikelihood}
	f(\casc;\alphs) = \prod_{t_i \leq T} \prod_{t_m > T} S(T | t_i ; \alpha_{i,m}) \times \\
	\prod_{k : t_k < t_i} S(t_i | t_k ; \alpha_{k,i}) \sum_{j : t_j < t_i}
	H(t_i | t_j ; \alpha_{j,i}).
\end{multline}
Assuming independent cascades, the likelihood of a set of cascades $C = \{\casc^1,\ldots,\casc^{|C|}\}$ is the product of the likelihoods of the individual cascades given by Eq.~\ref{eq:loglikelihood}:
\begin{align}
	\label{eq:cloglikelihood}
	\prod_{\casc^c \in C} f(\casc^c ; \alphs).
\end{align}

\xhdr{Network Inference Problem}
Our goal is to find the transmission rates $\alpha_{j,i}$ of every pair of nodes such that the likelihood of an observed set of cascades $C = \{\casc^1,\ldots,\casc^{|C|}\}$ 
is maximized. Thus, we aim to solve:
\begin{equation} 
	\label{eq:optproblem}
	\begin{array}{ll}
		\mbox{minimize$_{\alphs}$} & - \sum_{c \in C} \log f(\casc^c;\alphs) \\ 
		\mbox{subject to} & \alpha_{j,i} \geq 0,\, i,j=1,\ldots,N, i \neq j,
	\end{array}
\end{equation}
where $\alphs := \{\alpha_{j,i}\,|\, i,j=1,\ldots,n, i \neq j\}$ are the variables. The edges of the network are 
those pairs of nodes with transmission rates $\alpha_{j,i} > 0$.

\section{Proposed algorithm: \netrate}
\label{sec:proposed}
The solution to Eq.~\ref{eq:optproblem} is unique, computable and consistent:

\begin{theorem}
	Given log-concave survival functions and concave hazard functions in the parameter(s) of the pairwise transmission likelihoods, the network inference problem defined by equation Eq.~\ref{eq:optproblem} is convex in $\alphs$.
\end{theorem}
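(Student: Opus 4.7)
The plan is to expand the negative log-likelihood of a single cascade using Eq.~\ref{eq:loglikelihood}, show that each of the three resulting types of terms is convex in $\alphs$, and then conclude by summing over cascades (Eq.~\ref{eq:cloglikelihood}) and noting that the feasible set $\{\alpha_{j,i} \geq 0\}$ is convex. Concretely, $-\log f(\casc;\alphs)$ splits into (i) survival terms $-\log S(T | t_i;\alpha_{i,m})$ for pairs with $t_i \le T < t_m$, (ii) survival terms $-\log S(t_i | t_k;\alpha_{k,i})$ for pairs with $t_k < t_i \le T$, and (iii) sum-of-hazards terms $-\log \sum_{j:t_j<t_i} H(t_i | t_j;\alpha_{j,i})$ for each infected $i$.

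Terms of type (i) and (ii) are immediate: each depends on a single variable $\alpha_{j,i}$, and since $S$ is assumed log-concave in that variable, $-\log S$ is convex. Summing such one-variable convex functions over pairs preserves convexity in $\alphs$.

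The main work is type (iii). Since each hazard $H(t_i | t_j;\alpha_{j,i})$ is concave in its single parameter $\alpha_{j,i}$ by hypothesis, and the different summands depend on disjoint variables, the sum $g(\alphs) := \sum_{j:t_j<t_i} H(t_i | t_j;\alpha_{j,i})$ is concave in $\alphs$ as a sum of concave functions. On the relevant domain $g$ is strictly positive (at least one term is positive whenever some $\alpha_{j,i}>0$, and the boundary case is handled by continuity/extended-value conventions). Now I invoke the standard composition rule: if $h$ is concave and nondecreasing and $g$ is concave, then $h\circ g$ is concave. Taking $h = \log$, which is concave and nondecreasing on $\mathbb{R}_{>0}$, gives that $\log g$ is concave, so $-\log g$ is convex. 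This composition step is the one place I expect to need care, because it is the only term that couples several variables $\alpha_{j,i}$ through a nonlinear outer function, and one has to be explicit about the domain on which the logarithm is defined.

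Putting the pieces together, $-\log f(\casc;\alphs)$ is a sum of convex functions and is therefore convex in $\alphs$. Summing over the cascades $\casc^c \in C$ preserves convexity, so the objective of Eq.~\ref{eq:optproblem} is convex. The constraint set $\{\alphs : \alpha_{j,i}\ge 0\}$ is a (convex) orthant, which completes the proof that the optimization problem is convex. A useful by-product of the argument is that the objective decouples across targets $i$: all terms in which a given $\alpha_{\cdot,i}$ appears involve only the column of rates incoming to $i$, yielding the per-node subproblems exploited by \netrate.
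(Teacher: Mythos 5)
Your proposal is correct and follows essentially the same route as the paper: the paper decomposes the log-likelihood into the same three terms ($\Psi_1$, $\Psi_2$, $\Psi_3$) and appeals to linearity, the composition rules for concavity, and concavity of the logarithm, which is exactly the argument you spell out in more detail (your careful treatment of the $\log\sum_j H$ term via the concave-nondecreasing composition rule is the substance the paper leaves implicit).
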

\begin{proof}	
	 By Eq.~\ref{eq:cloglikelihood}, the log-likelihood of a set of cascades is
	\begin{multline} \label{eq:lcasc}
		L\big(\{\casc^1\ldots\casc^{|C|}\}; \alphs\big) = \\
		\sum_c \Psi_1(\casc^c; \alphs) + \Psi_2(\casc^c; \alphs) + \Psi_3(\casc^c; \alphs),
	\end{multline}
	where for each cascade $\casc^c \in \{\casc^1,\ldots,\casc^{|C|}\}$,
	\begin{align*}
		\Psi_1(\casc^c; \alphs) &= \sum_{i : t_i \leq T} \sum_{t_m > T}
		\log S(T | t_i ; \alpha_{i,m}),\\
		\Psi_2(\casc^c; \alphs) & = \sum_{i : t_i \leq T} \sum_{j : t_j < t_i}
		\log S(t_i | t_j ; \alpha_{j,i}), \\
		\Psi_3(\casc^c; \alphs) &= \sum_{i : t_i \leq T} \log \left(\sum_{j : t_j < t_i}
		H(t_i | t_j ; \alpha_{j,i})\right).
	\end{align*}
If all pairwise transmission likelihoods between pairs of nodes in the network have log-concave survival functions and concave hazard functions in the parameter(s) of the pairwise transmission likelihoods, then convexity of Eq.~\ref{eq:optproblem} follows from linearity, composition rules for concavity, and concavity of the logarithm. 
\end{proof}

\begin{corollary}
	The network inference problem defined by equation Eq.~\ref{eq:optproblem} is convex for the exponential, power-law and Rayleigh models.
\end{corollary}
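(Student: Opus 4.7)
The plan is to invoke the theorem by verifying its two hypotheses, log-concavity of the survival function and concavity of the hazard function (both in the rate parameter $\alpha_{j,i}$), separately for each of the three parametric families in Table~\ref{tab:models}. Since the theorem does the heavy lifting, this corollary reduces to reading entries from the table and checking that they satisfy the stated regularity conditions.

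First I would note that in all three models the log-survival functions listed in Table~\ref{tab:models} are in fact \emph{linear} in $\alpha_{j,i}$: for \expo one has $-\alpha_{j,i}(t_i-t_j)$, for \pow one has $-\alpha_{j,i}\log((t_i-t_j)/\delta)$, and for \ray one has $-\tfrac{1}{2}\alpha_{j,i}(t_i-t_j)^2$. Each is affine in $\alpha_{j,i}$ (the data $t_i,t_j,\delta$ only enter as coefficients), hence concave, which is precisely the statement that the corresponding survival function $S(t_i\mid t_j;\alpha_{j,i})$ is log-concave in $\alpha_{j,i}$.

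Next I would read off the hazard column: $H=\alpha_{j,i}$ for \expo, $H=\alpha_{j,i}/(t_i-t_j)$ for \pow, and $H=\alpha_{j,i}(t_i-t_j)$ for \ray. Again each is linear in $\alpha_{j,i}$, hence concave. Having verified both hypotheses of the theorem for all three families, the corollary follows immediately by applying the theorem.

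There is essentially no hard step; the only thing one has to be slightly careful about is that concavity/log-concavity is required \emph{in the parameter} $\alpha_{j,i}$, not in the time variables, so the argument is really just ``affine in $\alpha_{j,i}$, therefore concave.'' If one wanted to be more pedantic, one could remark that for the power-law model the coefficient $\log((t_i-t_j)/\delta)$ is well-defined and finite because the density is defined only for $t_j+\delta<t_i$, so no degeneracy arises on the domain of interest.
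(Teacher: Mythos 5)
Your proposal is correct and matches the paper's intent exactly: the paper gives no explicit proof of the corollary, but Table~\ref{tab:models} is set up precisely so that one reads off that the log-survival and hazard functions are linear (hence concave) in $\alpha_{j,i}$, and the theorem then applies. Nothing is missing.
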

\vspace{.5mm}
\begin{theorem}
	The maximum likelihood estimator $\hat{\alpha}$ given by the solution of Eq.~\ref{eq:optproblem} is consistent.
\end{theorem}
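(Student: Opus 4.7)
The plan is to prove consistency by invoking the classical convex M-estimator argument, which in our setting requires (i) that the cascade observations be i.i.d., (ii) that the model be identifiable, and (iii) that the negative log-likelihood be convex. Ingredient (i) is built into the data-generating assumption; ingredient (iii) is exactly Theorem~1. So the substance of the proof reduces to establishing ingredient (ii), identifiability, and then applying a standard limit theorem for convex objectives.

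First, since the cascades $\{\casc^c\}$ are drawn i.i.d.\ from $f(\casc;\alphs^*)$, the strong law of large numbers yields, for each fixed $\alphs$,
\[
\frac{1}{|C|}\sum_{c} \log f(\casc^c;\alphs) \xrightarrow{\mathrm{a.s.}} M(\alphs):=\mathbb{E}_{\alphs^*}\bigl[\log f(\casc;\alphs)\bigr],
\]
provided the expectation is finite (which holds for our three parametric families under mild moment conditions on the cascade length $T$ and the node count $N$). By Gibbs' inequality, $M(\alphs^*)-M(\alphs)=\mathrm{KL}\bigl(f(\cdot;\alphs^*)\,\|\,f(\cdot;\alphs)\bigr)\ge 0$, with equality if and only if the two cascade distributions agree. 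Hence $\alphs^*$ is the unique maximizer of $M$ precisely when the parametrization is identifiable.

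Identifiability is the main obstacle. I would argue it rate by rate: for a fixed ordered pair $(j,i)$, restrict to the event that in a cascade $j$ is infected at some time $t_j$ while $i$ is infected at some $t_i>t_j$ with no other potential parent of $i$ infected in $[0,t_i)$; this event has strictly positive probability under any $\alphs^*$ in which $j\to i$ can carry infection, and conditional on it the density of $t_i-t_j$ is exactly $f(t_i\mid t_j;\alpha_{j,i})$, which is strictly monotone in $\alpha_{j,i}$ for each of the models in Table~\ref{tab:models}. Pairs with $\alpha^*_{j,i}=0$ are identified by the boundary case in which the corresponding hazard vanishes and no $j\to i$ transmissions are observed in the limit. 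Combining identifiability, convexity from Theorem~1, and the classical fact that pointwise convergence of convex functions on an open convex set upgrades to uniform convergence on compacta (Rockafellar), the minimizer $\hat{\alphs}_{|C|}$ of the sample objective in Eq.~\ref{eq:optproblem} converges almost surely to the unique minimizer $\alphs^*$ of $-M$; the delicate points to watch are the boundary behavior at $\alpha^*_{j,i}=0$ and verifying that the conditioning events used to isolate individual rates decouple the nuisance dependence introduced by the sum over potential parents in $\Psi_3$.
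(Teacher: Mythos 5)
Your proposal is correct in outline but takes a genuinely different technical route from the paper. The paper's argument is a three-line sketch that checks the Newey--McFadden conditions for extremum estimators: identification (asserted as ``each $\alphs$ defines a unique function $\log f(\cdot\,;\alphs)$''), continuity of the log-likelihood in $\alphs$, and compactness, the last justified by claiming $L\to-\infty$ as $\alpha_{j,i}\to 0$ or $\alpha_{j,i}\to\infty$ so that the parameter set can be truncated to a compact box; uniform convergence is left implicit. You instead run the convex $M$-estimation argument: pointwise SLLN, Gibbs' inequality to characterize the population maximizer, and the convexity lemma (pointwise a.s.\ convergence of convex functions on an open convex set upgrades to uniform convergence on compacta, hence convergence of minimizers) in place of compactness. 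Your route buys something real: it dispenses with the paper's compactification step, whose justification is actually shaky --- for a sparse true network most $\alpha^*_{j,i}=0$, so $L$ does \emph{not} diverge as those coordinates tend to $0$, and the truth sits on the boundary of the feasible orthant rather than in the interior of a compact box bounded away from it. You explicitly flag this boundary behavior as a point requiring care; the paper's sketch does not address it at all.

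The one step in your argument that does not work as stated is the identifiability computation. Conditional on the event that $j$ is infected at $t_j$, $i$ at $t_i>t_j$, and no other potential parent of $i$ is infected before $t_i$, the density of $t_i-t_j$ is \emph{not} $f(t_i\mid t_j;\alpha_{j,i})$: the conditioning event itself depends on $t_i$ (whether some third node gets infected before $t_i$ is more likely for larger $t_i$), so the conditional law acquires a selection factor involving the rates into and out of other nodes --- precisely the nuisance coupling you mention at the end but do not resolve. Nor does restricting to ``clean'' two-node cascades fully escape this, since the survival terms $S(T\mid t_i;\alpha_{i,m})$ toward uninfected nodes still make the shape of the observed density of $t_i$ depend on parameters other than $\alpha_{j,i}$. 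A complete identifiability proof therefore has to work with the joint cascade density and peel off these dependencies (e.g., by strict monotonicity of the hazard ratio in $\alpha_{j,i}$ for the models of Table~\ref{tab:models} after the other factors are matched), which is more delicate than your sketch suggests. In fairness, the paper asserts identifiability without any argument, so your proposal is no less complete than the published sketch on this point --- but the conditional-density claim should be repaired before the proof is considered closed.
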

\begin{xsketch}
	We check the criteria for consistency of identification, continuity and compactness \cite{newey94}.
	The log-likelihood in Eq.~\ref{eq:lcasc} is a continuous function of $\alphs$ for any fixed set of cascades $\{\casc^1\ldots\casc^{|C|}\}$, and each $\alpha$ defines a unique function $\log f(\cdot|\alphs)$ on the set of cascades. Finally, note that $L\rightarrow -\infty$ for both $\alpha_{ij}\rightarrow 0$ and $\alpha_{ij}\rightarrow \infty$ for all $i,j$ so we lose nothing imposing upper and lower bounds thus restricting to a compact subset.
	
\end{xsketch}

We refer to our network inference method as \netrate.
\begin{figure*}[!!t]
\centering
  \subfigure[Precision-recall (Hierarchical, \expo)]{\includegraphics[width=0.28\textwidth]{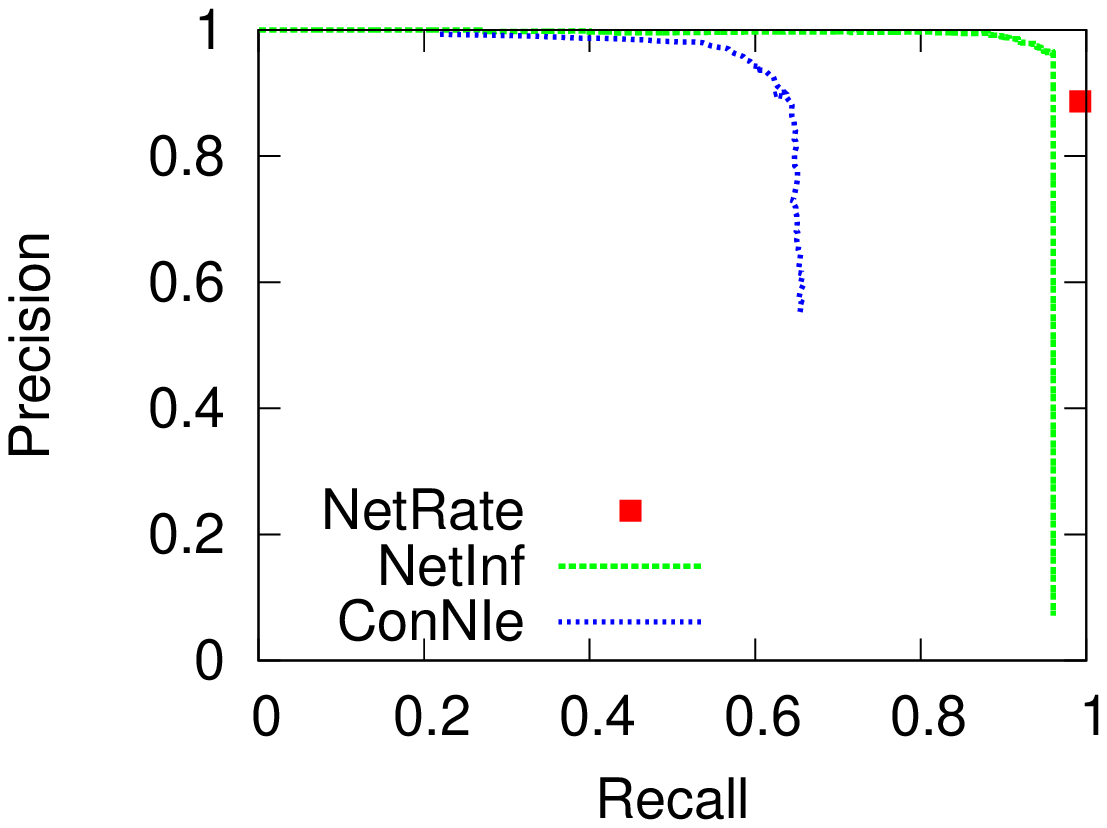} \label{fig:pr_exp_hie}} 
  \subfigure[Precision-recall (Forest fire, \pow)]{\includegraphics[width=0.28\textwidth]{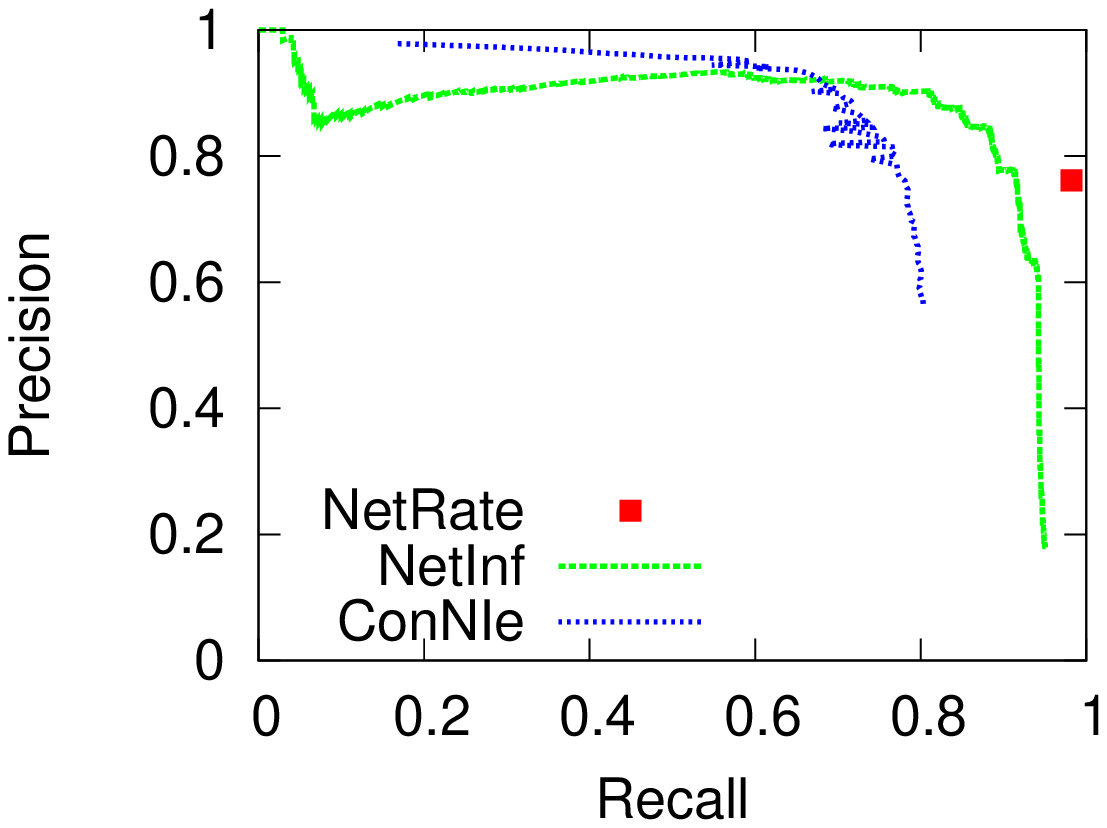} \label{fig:pr_pl_ff}} 
  \subfigure[Precision-recall (Random, \ray)]{\includegraphics[width=0.28\textwidth]{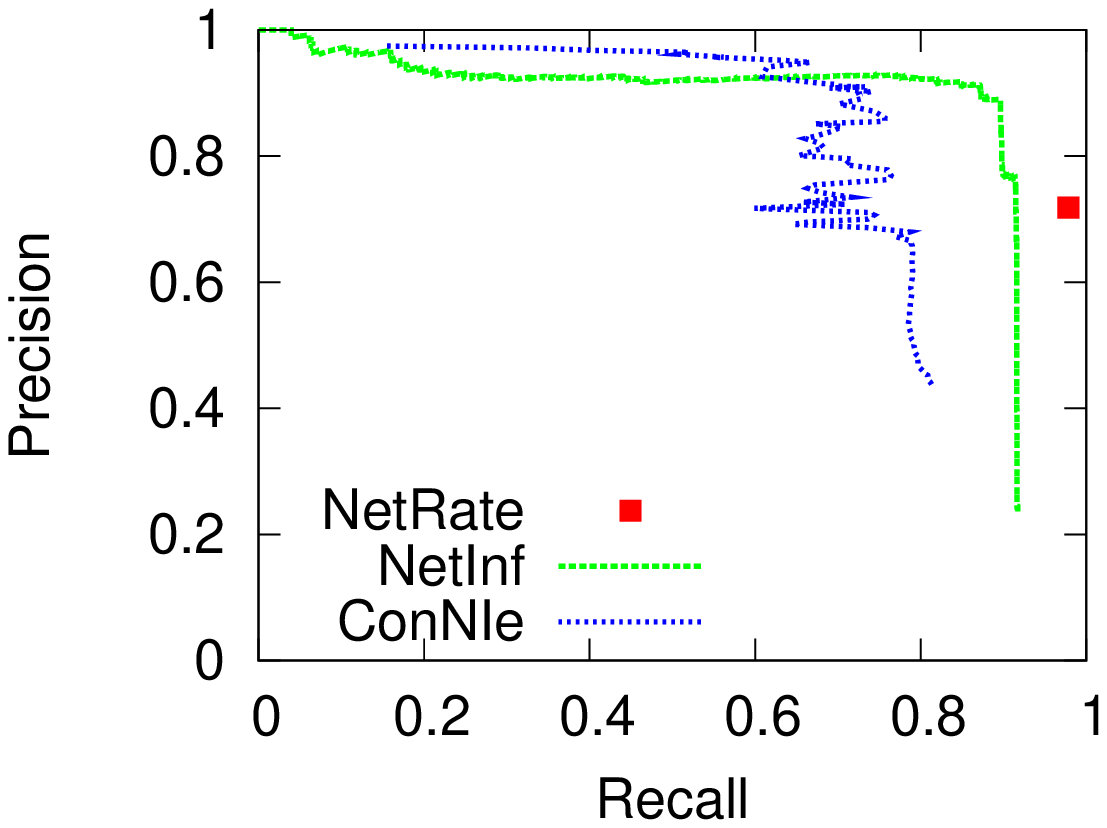} \label{fig:pr_rayleigh_random}} \\
  \subfigure[Accuracy (Hierarchical, \expo)]{\includegraphics[width=0.28\textwidth]{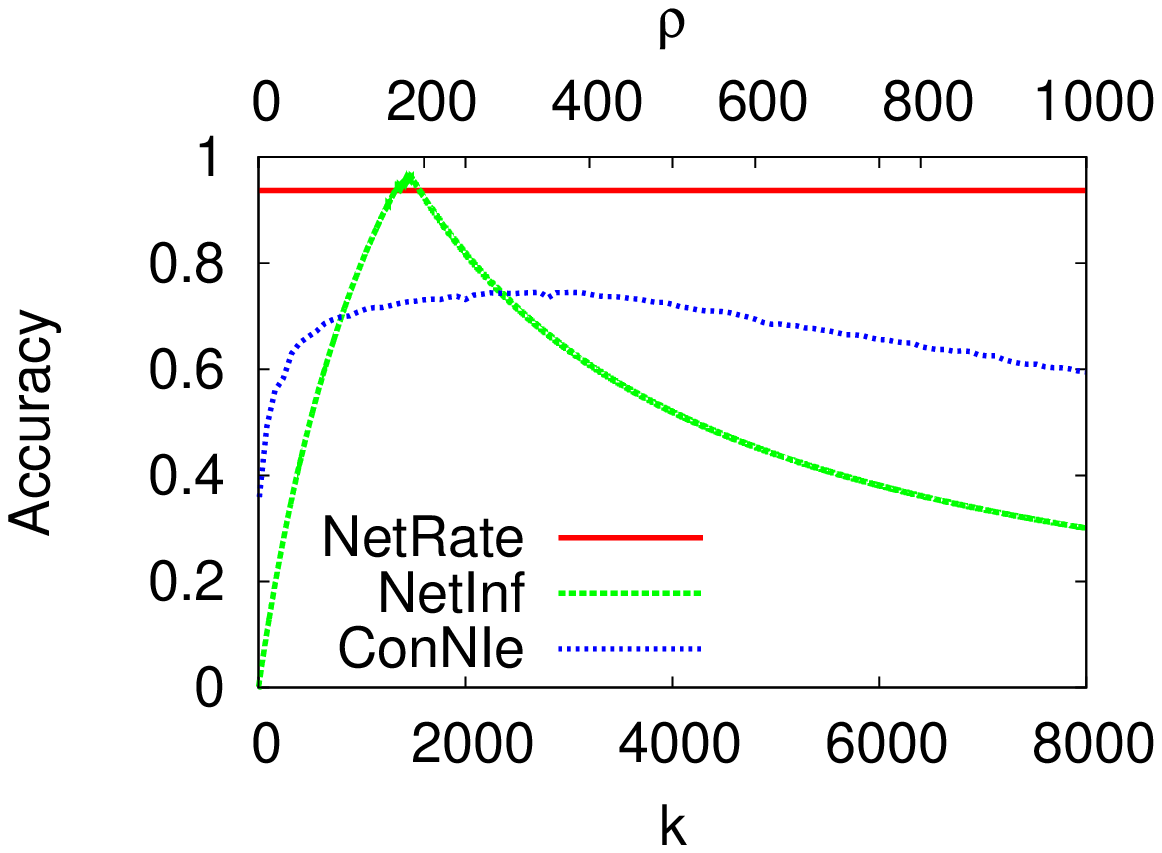} \label{fig:acc_exp_hie}} 
  \subfigure[Accuracy (Forest fire, \pow)]{\includegraphics[width=0.28\textwidth]{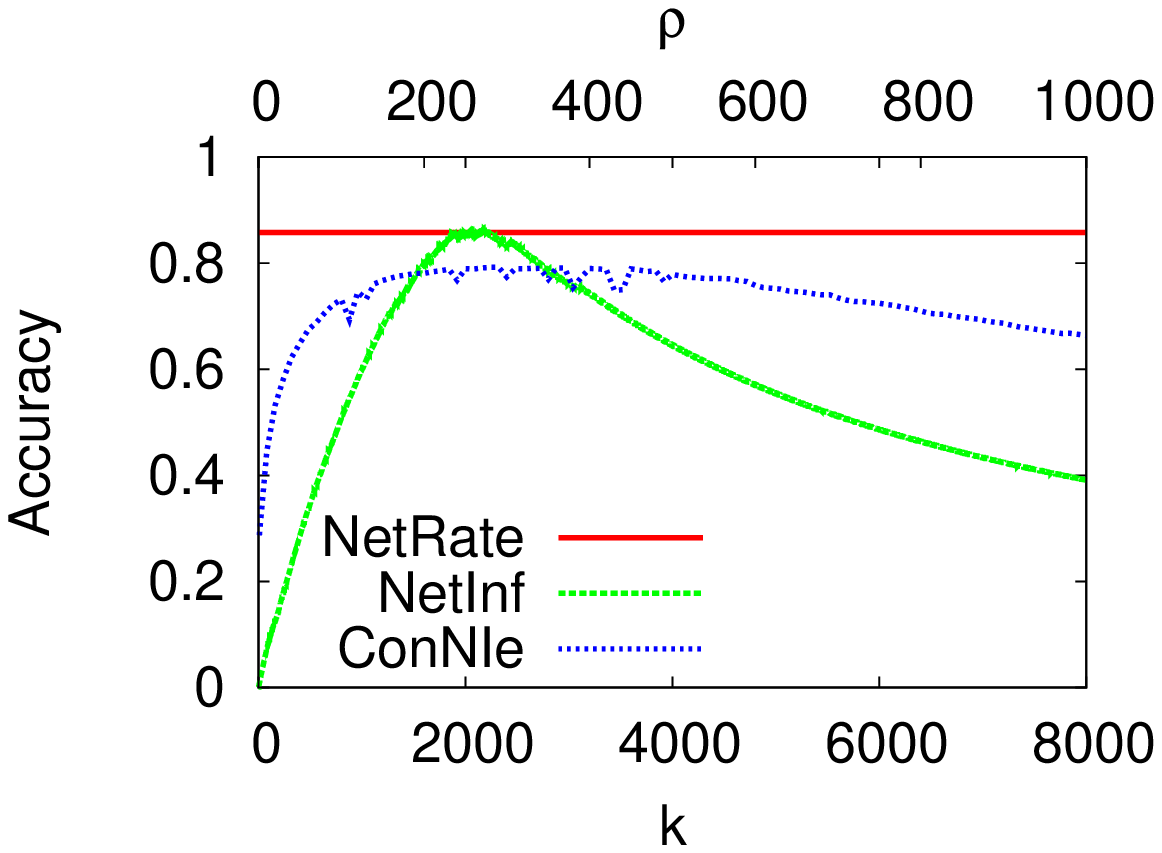} \label{fig:acc_pl_ff_5000}}
  \subfigure[Accuracy (Random, \ray)]{\includegraphics[width=0.28\textwidth]{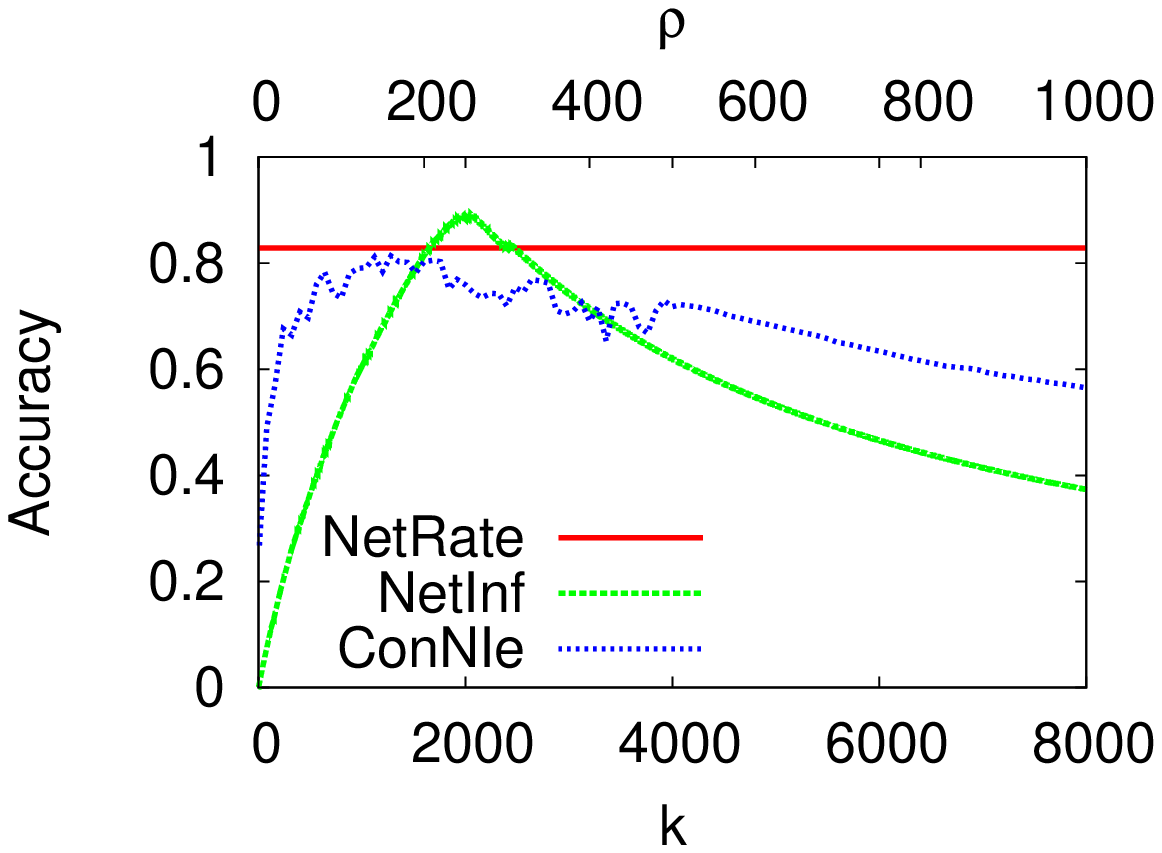} \label{fig:acc_rayleigh_random}}
	\caption{
	Panels (a-c) plot precision against recall; panels (d-f) plot accuracy. For \connie and \netinf we sweep over parameters $\rho$ (penalty factor) and $k$ (number of edges) respectively to control the solution sparsity in both algorithms, thereby generating a family of inferred models. \netrate has no tunable parameters and therefore yields a unique solution.
	(a,d): 1,024 node hierarchical Kronecker network with exponential model for 5,000 cascades.
	(b,e): 1,024 node Forest Fire network with power law model for 5,000 cascades.
	(c,f): 1,024 node random Kronecker network with Rayleigh model for 2,000 cascades. 	
	}
	\label{fig:pr}
\end{figure*}

\xhdr{Properties of \netrate}
We highlight some common features of the solutions to the network inference problem for the exponential, power-law and Rayleigh models. 

All terms in Eq.~\ref{eq:lcasc} depend only on transmission rates $\alpha_{j, i}$ and infection time differences $(t_i - t_j)$, but not absolute infection times $t_i$ or $t_j$. 
Our formulation thus does not depend on the absolute time of the root node of each cascade.

The $\Psi_1$ and $\Psi_2$ terms contribute a positively weighted $l_1$-norm on vector $\alphs$ that encourages sparse solutions~\citep{boyd2004convex}. The penalty arises 
naturally within the probabilistic model and therefore heuristic penalty terms to encourage sparsity are not necessary. Each term of the $l_1$-norm is 
linearly (exponential model), logarithmically (power-law) or quadratically (Rayleigh) weighted by infection times.

The $\Psi_2$ term penalizes edges $k \rightarrow i$ based on the infection times difference $t_i-t_k$. Edges transmitting infections slowly are heavily penalized and conversely. The $\Psi_1$ term 
penalizes edges $i \rightarrow j$ targeting \emph{uninfected} nodes $j$ based on the time $T-t_i$ till the observation window cutoff. Lengthening the observation window produces harsher penalties -- 
however, it also allows further infections. The penalties are finite, \ie, if no infection of node $j$ is observed, we can only say that it has survived until time $T$. There is 
insufficient evidence to claim $j$ will never be infected. \netrate does not use empirically ungrounded parameters (such as number of edges $k$ and penalty factor $\rho$ used by \netinf and \connie respectively) to leap from not observing an infection to inferring it is impossible. Instead, \netrate infers that infections are impossible across certain edges, \ie, that some of the optimal rates $\alpha_{j, i}$ are $0$, based solely on the observed data and the length of the time horizon.

The $\Psi_3$ term ensures infected nodes have at least one parent since otherwise the objective function would be negatively unbounded, \ie, $\log 0=-\infty$. Moreover, our formulation encourages a natural diminishing property on the number of parents of a node -- since the logarithm grows \emph{slowly}, it weakly rewards infected nodes for having many parents.

\xhdr{Optimizing \netrate}
We speed up the convex program by orders of magnitude via two improvements: 

{\em Distributed optimization:} The optimization problem splits into $N$ subproblems, one for each node $i$, in which we find $N-1$ rates $\alpha_{j, i},\, j=1,\ldots,N \setminus i$. The computation can be performed in parallel, obtaining local solutions that are globally optimal. Importantly, each node's computation only requires the infection times of other nodes in cascades it belongs to. 

{\em Unfeasible rates:} If a pair $(j, i)$ is not in any common cascades, $\alpha_{j,i}$ only arises in the non-positive term $\Psi_3$ in Eq.~\ref{eq:lcasc}, so the optimal 
$\alpha_{j, i}$ is zero. We therefore simply the objective function by setting $\alpha_{j,i}$ to zero.

\xhdr{Solving \netrate}
We solve Eq.~\ref{eq:optproblem} with \texttt{CVX}, a package for specifying and solving convex programs~\citep{cvx}. 

\section{Experimental evaluation}
\label{sec:evaluation}
We evaluate the performance of \netrate on: (i) synthetic networks that mimic the structure of social networks and (ii) real cascades extracted from the MemeTracker dataset\footnote{Data available at \url{http://memetracker.org}}. 
We show that \netrate discovers more than 95\% of the edges in synthetic networks and more than 60\% in real networks, accurately recovers transmission rates from diffusion data, and typically outperforms two previously developed inference algorithms, \netinf and \connie. We use the public implementations of \netinf and \connie.
\begin{figure}[t]
\centering
	\includegraphics[width=0.28\textwidth]{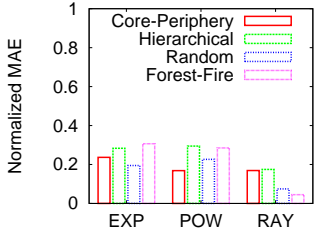}
 	\caption{\netrate'{}s normalized mean absolute error (MAE) for three types of Kronecker networks 
(1,024 nodes and 2,048 edges) and a Forest Fire network (1,024 edges and 2,422 edges) for 5,000 cascades. We consider all three 
models of transmission: exponential (\expo), power-law (\pow) and Rayleigh (\ray).}
	\label{fig:mae}
\end{figure}

\subsection{Experiments on synthetic data}

\xhdr{Experimental setup}
We focus on synthetic networks that mimic the structure of real-world diffusion networks -- in particular, social networks. We consider two models of directed real-world social networks: the Forest Fire (scale free) model~\citep{barabasi99emergence} and the Kronecker Graph model~\citep{leskovec2010kronecker} to generate diffusion networks. We generate three types of Kronecker graph with very different structures: random~\cite{erdos60random} (parameter matrix $[0.5, 0.5; 0.5, 0.5]$), hierarchical~\citep{clauset08hierarchical} ($[0.9, 0.1; 0.1, 0.9]$) and core-periphery~\citep{jure08ncp} ($[0.9, 0.5; 0.5, 0.3]$). 

First, we generate network $G^*$ by drawing transmission rates for edges $(j, i)$ from a uniform distribution. For the exponential and Rayleigh models $\alpha\in [0.01, 1]$ and for the power law $\alpha\in [0.01,2]$. The transmission rate for an edge $(j,i)$ models how fast the information spreads from node $j$ to node $i$ in social networks. Then, we generate a set of cascades over $G^*$. Root nodes of cascades are chosen uniformly at random. As noted previously, the optimization problem depends on the time 
differences $(t_i - t_j)$. Therefore, our formulation does not depend on the absolute time of the root node of each cascade. Once 
a node is infected, the transmission likelihoods of outgoing edges determine the infection times of its neighbours. We record the time of the first infection if a node is infected more than once. Infections are not observed after a pre-specified time horizon $T$.

\xhdr{Accuracy of \netrate}
We evaluate \netrate against two other inference methods, \netinf and \connie, by comparing the inferred and true networks via three measures: precision, recall and accuracy. Precision is the fraction of edges in the inferred network $\hat{G}$ present in the true network $G^*$ . Recall is the fraction of edges of the true network $G^*$ present in the inferred network $\hat{G}$. Accuracy is $1-\frac{\sum_{i,j} |I(\alpha^*_{i,j})-I(\hat{\alpha}_{i,j})|}{\sum_{i,j}I(\alpha^*_{i,j}) + \sum_{i,j} I(\hat{\alpha}_{i,j})}$, where $I(\alpha)=1$ if $\alpha > 0$ and $I(\alpha)=0$ otherwise. Inferred networks with no edges or only false edges have zero accuracy.
Second, we evaluate how accurately \netrate infers transmission rates over edges by computing the normalized mean absolute error (MAE, \ie, $E\big[|\alpha^*-\hat{\alpha}|/\alpha^*\big]$, where $\alpha^*$ is the true transmission rate and $\hat{\alpha}$ is the estimated transmission rate). Note that in \netrate, as for real cascades, the probability 
of infection depends on both the transmission rate and the observation window. In contrast, \connie assigns probability priors to edges that are defined without reference to 
an observation window. Therefore, the values assigned to edges by \netrate and \connie are not comparable, so we do not compute MAE for \connie.

Figure~\ref{fig:pr} compares the precision, recall and accuracy of \netrate with \netinf and \connie for two types of Kronecker networks (hierarchical community structure and random) and a Forest Fire network over an observation window of length $T = 10$.
In terms of precision-recall, \netrate outperforms \connie and \netinf for all the synthetic examples in the Pareto sense~\citep{boyd2004convex}. More 
specifically, if we set \connie and \netinf'{}s tunable parameters to provide solutions with the same precision as \netrate, \netrate'{}s recall is 
always higher than the other two methods. Strikingly, \connie and \netinf do not achieve \netrate'{}s recall for any precision value.
\netrate outperforms \connie with respect to accuracy for any penalty factor $\rho$ in all the synthetic examples. It is also more accurate than \netinf for most values of $k$ (number of edges). Importantly, \netinf and \connie yield a curve of solutions from which have to select a point blindly (or at best heuristically), whereas \netrate yields a unique solution without any tuning. 

Figure~\ref{fig:mae} shows the normalized MAE of the estimated transmission rates for the same networks, computed on 5,000 cascades. The normalized MAE is under 25\% for almost all networks and transmission models -- surprisingly low given we are estimating more than 2,000 non-zero real numbers.

\xhdr{\netrate performance vs. cascade coverage}
Observing more cascades leads to higher precision-recall and more accurate estimates of the transmission rates. Fi\-gure~\ref{fig:cc} plots the MAE of inferred networks against the number of observed cascades for a hierarchical Kronecker network with all three transmission models.  Estimating transmission rates is considerably harder 
than simply discovering edges and therefore more cascades are needed for accurate estimates. As many as 5,000 cascades are required to obtain normalized MAE values lower 
than 20\%.

\xhdr{\netrate performance vs. time horizon}
Intuitively, the longer the observation window, the more accurately \netrate is able to infer transmission rates. Fi\-gure~\ref{fig:th} confirms this intuition by showing the MAE of inferred networks for different time horizons $T$ for a hierarchical Kronecker with exponential, power-law and Rayleigh transmission models for 5,000 cascades.

\xhdr{\netrate running time}
Fi\-gure~\ref{fig:sc} plots the average running time to infer rates of all incoming edges to a node against number of nodes in a network (the number of edges is 
twice the number of nodes) on a single CPU. Further improvements can be achieved since \netrate naturally splits into a collection of subproblems, one per node. A cluster 
with 25 CPUs can therefore infer a network with 16,000 nodes (and 32,000 edges) in less than 4 hours.
\begin{figure*}[t]
	\centering
	\subfigure[Cascade coverage]{\includegraphics[width=0.28\textwidth]{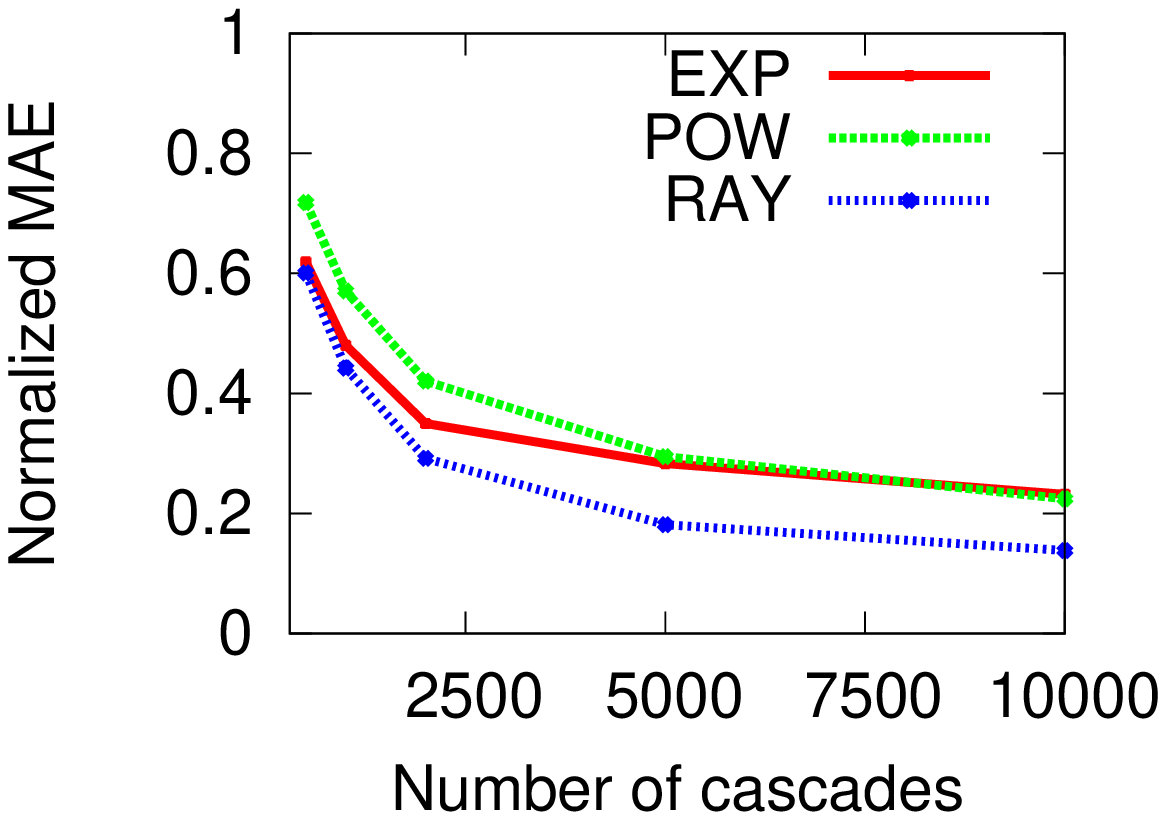} 
	\label{fig:cc}}
	\subfigure[Time horizon]{\includegraphics[width=0.28\textwidth]{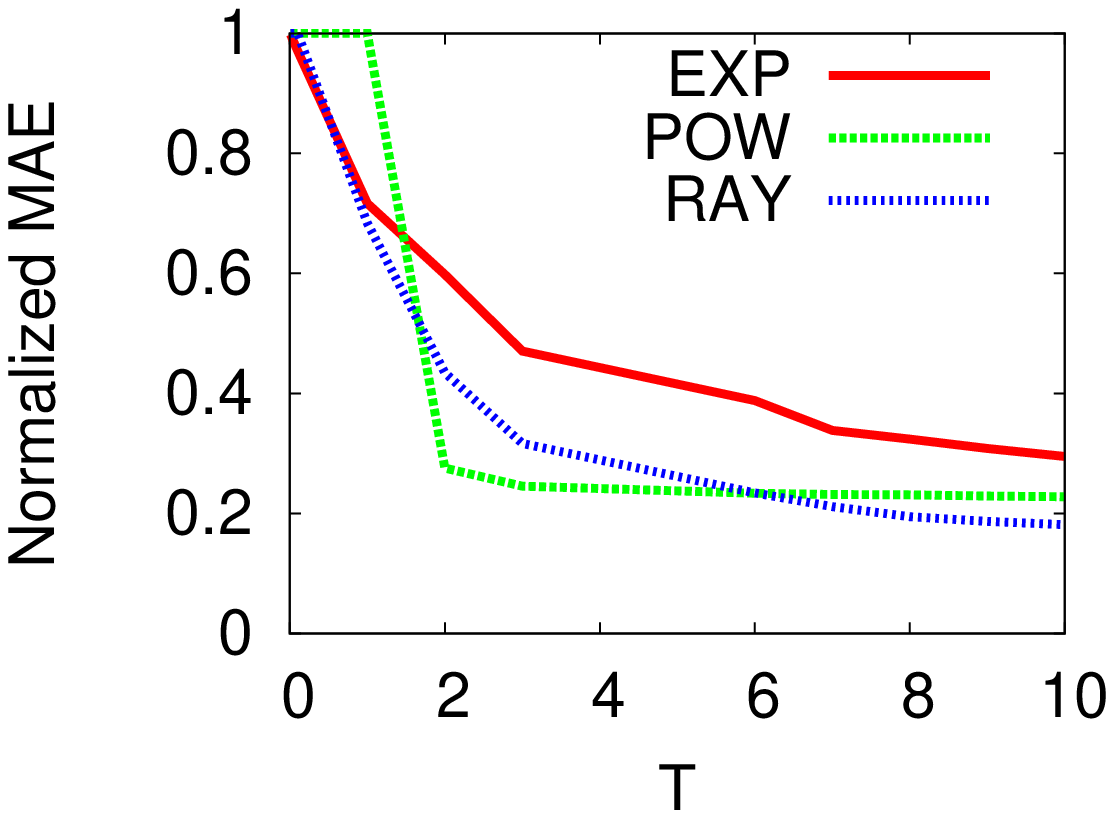}
	\label{fig:th}}
	\subfigure[Running time]{\includegraphics[width=0.28\textwidth]{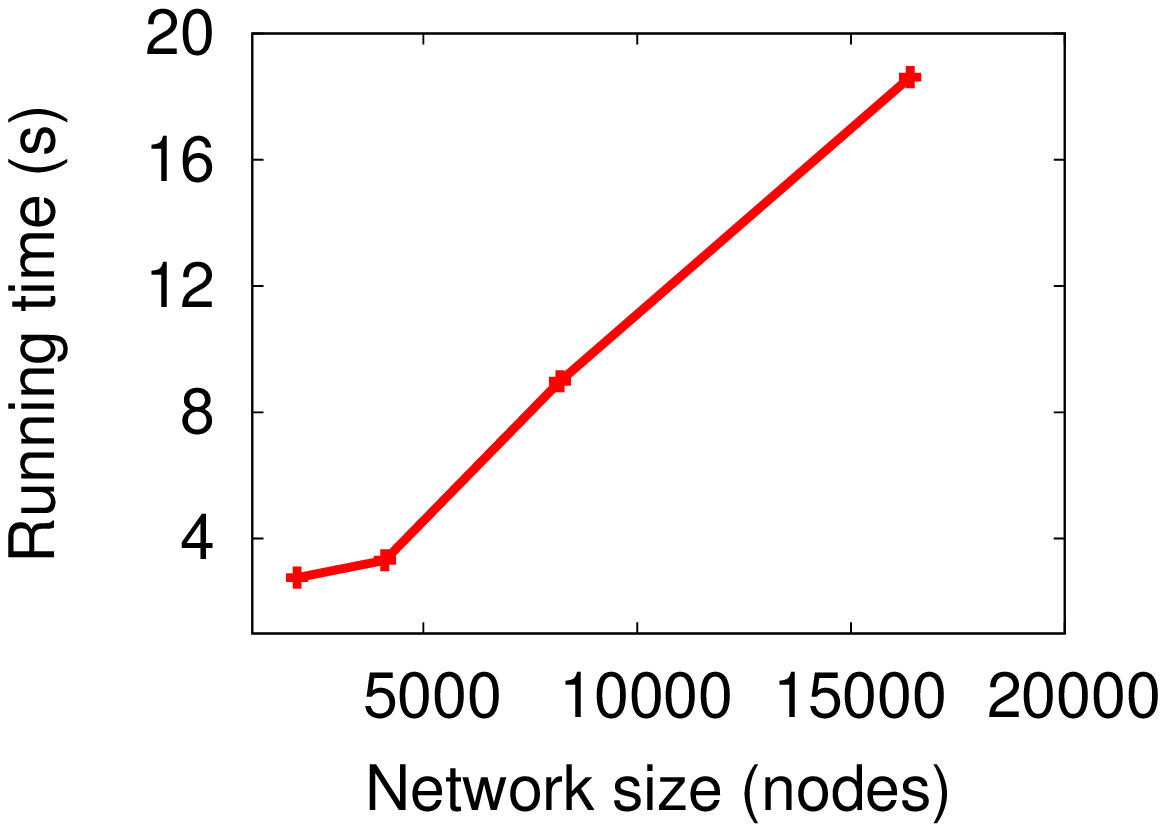}
	\label{fig:sc}}
 	\caption{Panels (a,b) show \netrate'{}s normalized MAE vs number of cascades and time horizon respectively for a hierarchical Kronecker network with 1,024 
nodes and 2,048 edges with exponential (\expo), power-law (\pow) and Rayleigh (\ray) transmission models. Panel (c) plots \netrate'{}s average running time to infer rates of
all incoming edges to a node against network size (number of nodes) for a hierarchical Kronecker network.
	}
\end{figure*}

\subsection{Experiments on real data}

\xhdr{Dataset description}
As in previous work tackling diffusion networks, we use the MemeTracker dataset, which contains more than $172$ million news articles and blog posts from $1$ million online sources.
We use hyperlinks between blog posts to trace the flow of information. A site publishes a piece of information and uses hyper-links to refer to the same or closely 
related pieces of information published by other sites. These other sites link to still others and so on. A cascade is thus a collection of time-stamped hyperlinks 
between sites (in blog posts) that refer to the same or closely related pieces of information. We record one cascade per piece -- or closely related pieces --
of information. We extract the top 500 media sites and blogs with the largest number of documents, 5,000 hyperlinks and 116,234 cascades.

\xhdr{Accuracy on real data}
As the ground truth is unknown on real data, we proceed as follows. We create a network where there is an edge $(u,v)$ if a post on a site $u$ linked to a post on a site $v$. We consider this as the ground truth network $G^*$ and we use the hyperlink cascades to infer the network $\hat{G}$ and evaluate how many edges our method estimates properly. We assume an exponential model.

Figure~\ref{fig:pr-real} compares our results with \netinf and \connie. As in the synthetic experiments, \netrate yields a unique solution whereas the other 
algorithms produce curves of solutions. Panel (a) shows that \netrate performs comparably to \netinf and it outperforms \connie on precision and recall. Panel (b) 
plots accuracy: \netrate's outperforms the other two algorithms on the majority of their outputted solutions, and almost matches their best performances. Since there are no principled methods for choosing single solutions for \netinf and \connie, there is no guarantee that the solutions chosen from the curves will be 
anywhere near the highest achievable value.

\section{Conclusions}
\label{sec:conclusions}
We have developed a flexible model, \netrate, of the spatiotemporal structure underlying diffusion processes. The model makes minimal assumptions about the physical, biological or cognitive mechanisms responsible for diffusion. Instead, it infers transmission rates between nodes of a network by computing the model that maximizes the likelihood of the observed data -- temporal traces left by cascades of infections. Qualitative assumptions about infections (\eg, are they long-tailed?) determine the choice  of parametric model on the edges. An interesting feature of \netrate, to be investigated in future work, is the possibility of mixing exponential, power law, Rayleigh or other models within a single inference algorithm, thus providing tremendous flexibility in fitting real data which may combine long-tailed, faddish and other qualitative behaviors.

Remarkably, introducing continuous temporal dynamics, allowing variable transmission rates across edges, and avoiding further assumptions dramatically simplified the problem compared with previous approaches~\citep{manuel10netinf, meyers10netinf}. The model has parameters with natural interpretations, and it leads to a well-defined convex maximum likelihood problem that can be solved efficiently. Importantly, we do not need to tune parameters by hand to control the sparsity of the inferred network (\ie, number of edges to infer or penalty terms). Heuristic $l_1$-like penalty terms, as the ones used in~\citet{meyers10netinf}, are unnecessary since the probabilistic model naturally imposes sparsity.

We evaluated \netrate on a wide range of synthetic diffusion networks with heterogeneous temporal dynamics which aim to mimic the structure of real-world social and information networks. \netrate provides a unique solution to the network inference problem with high recall, precision and accuracy. A direct comparison with the current state of the art 
is difficult, since these methods include a parameter controlling the sparsity of the inferred network that requires blind tuning. Nevertheless, \netrate is typically better in terms of accuracy than previous methods across the full range of their tunable parameters. In addition, it accurately estimates transmission rates, which other methods cannot estimate at all. The performance of \connie appears significantly worse than reported in~\citet{meyers10netinf}; a possible explanation for the degradation is that in
our work, we consider networks with heterogeneous temporal dynamics. It is surprising how well \netinf performs in comparison with \netrate despite assuming uniform temporal dynamics and priors.
\begin{figure}[t]
\centering
  \subfigure[Precision-recall]{\includegraphics[width=0.23\textwidth]{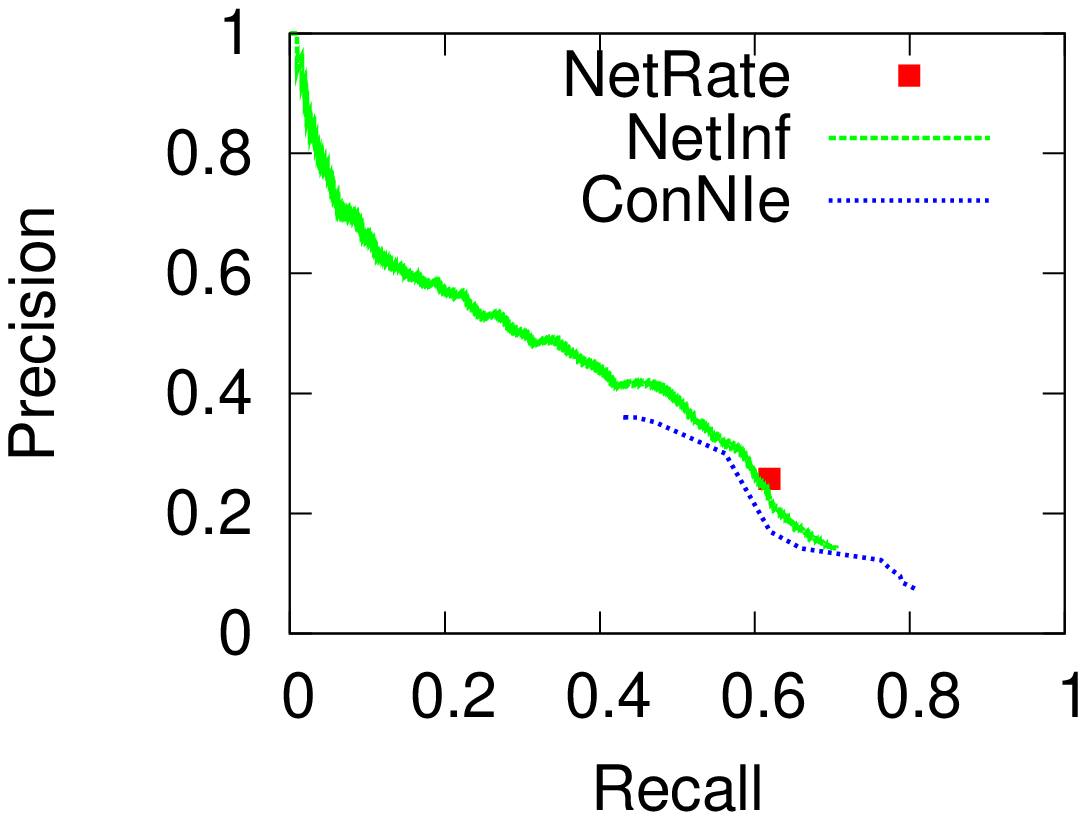}}
  \subfigure[Accuracy]{\includegraphics[width=0.23\textwidth]{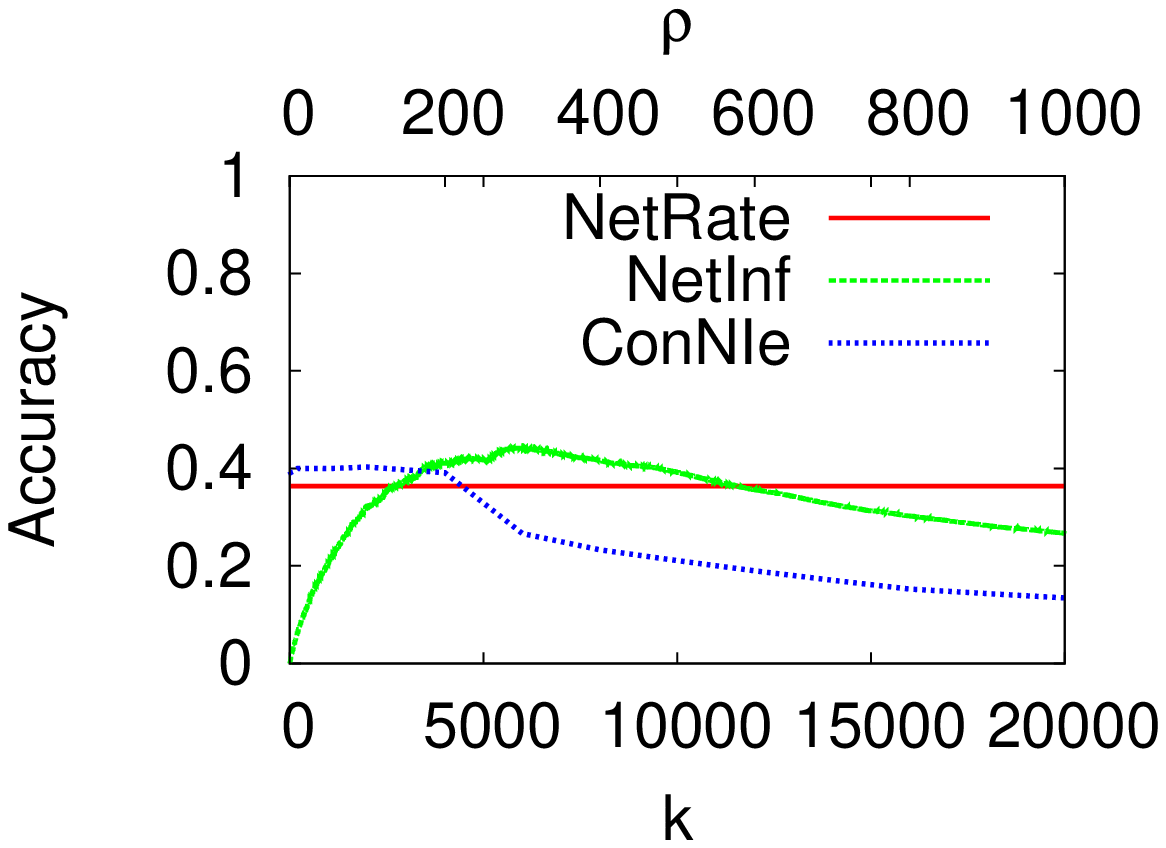}}
  \caption{Real data. Precision-recall and accuracy of \netrate, \netinf and \connie, with an exponential model, on a 500 node hyperlink network with 5,000 edges using hyperlinks cascades.} \label{fig:pr-real}
\end{figure}

Finally, we evaluated \netrate on real data. Again, \netrate provides a unique solution to the network inference problem but in this case, as expected, the values of recall, precision and accuracy are modest -- adopting a simple parametric pairwise transmission model is a simplistic assumption on real data. In terms of accuracy, it outperforms previous methods across a significant part of the full range of their tunable parameters. 

\netrate provides a novel view of diffusion processes. We believe it can be fruitfully applied to several lines of research including influence maximization, control of epidemics, and causal inference.

%
\bibliographystyle{icml2011}
\bibliography{refs}

\begin{thebibliography}{16}
\providecommand{\natexlab}[1]{#1}
\providecommand{\url}[1]{\texttt{#1}}
\expandafter\ifx\csname urlstyle\endcsname\relax
  \providecommand{\doi}[1]{doi: #1}\else
  \providecommand{\doi}{doi: \begingroup \urlstyle{rm}\Url}\fi

\bibitem[Adar \& Adamic(2005)Adar and Adamic]{adar05epidemics}
Adar, E. and Adamic, L.~A.
\newblock Tracking information epidemics in blogspace.
\newblock In \emph{Web Intelligence}, pp.\  207--214, 2005.

\bibitem[Barab\'{a}si \& Albert(1999)Barab\'{a}si and
  Albert]{barabasi99emergence}
Barab\'{a}si, A.-L. and Albert, R.
\newblock Emergence of scaling in random networks.
\newblock \emph{Science}, 286:\penalty0 509--512, 1999.

\bibitem[Boyd \& Vandenberghe(2004)Boyd and Vandenberghe]{boyd2004convex}
Boyd, S.P. and Vandenberghe, L.
\newblock \emph{{Convex optimization}}.
\newblock Cambridge University Press, 2004.

\bibitem[Clauset et~al.(2008)Clauset, Moore, and Newman]{clauset08hierarchical}
Clauset, A., Moore, C., and Newman, M. E.~J.
\newblock Hierarchical structure and the prediction of missing links in
  networks.
\newblock \emph{Nature}, 453\penalty0 (7191):\penalty0 98--101, 2008.

\bibitem[Erd\H{o}s \& R\'{e}nyi(1960)Erd\H{o}s and R\'{e}nyi]{erdos60random}
Erd\H{o}s, P. and R\'{e}nyi, A.
\newblock On the evolution of random graphs.
\newblock \emph{Publication of the Mathematical Institute of the Hungarian
  Academy of Science}, 5:\penalty0 17--67, 1960.

\bibitem[Gomez-Rodriguez et~al.(2010)Gomez-Rodriguez, Leskovec, and
  Krause]{manuel10netinf}
Gomez-Rodriguez, M., Leskovec, J., and Krause, A.
\newblock {Inferring Networks of Diffusion and Influence}.
\newblock In \emph{Proc. of the 16th ACM SIGKDD International Conference on
  Knowledge Discovery in Data Mining}, pp.\  1019--1028, 2010.

\bibitem[Grant \& Boyd(2010)Grant and Boyd]{cvx}
Grant, M. and Boyd, S.
\newblock {CVX}: Matlab software for disciplined convex programming, version
  1.21.
\newblock \url{http://cvxr.com/cvx}, 2010.

\bibitem[Kempe et~al.(2003)Kempe, Kleinberg, and Tardos]{kempe03maximizing}
Kempe, D., Kleinberg, J.~M., and Tardos, \'{E}.
\newblock Maximizing the spread of influence through a social network.
\newblock In \emph{Proc. of the 9th ACM SIGKDD International Conference on
  Knowledge Discovery and Data Mining}, pp.\  137--146, 2003.

\bibitem[Lappas et~al.(2010)Lappas, Terzi, Gunopulos, and
  Mannila]{lappas2010finding}
Lappas, T., Terzi, E., Gunopulos, D., and Mannila, H.
\newblock {Finding effectors in social networks}.
\newblock In \emph{Proc. of the 16th ACM SIGKDD International Conference on
  Knowledge Discovery and Data Mining}, pp.\  1059--1068, 2010.

\bibitem[Lawless(1982)]{lawless1982statistical}
Lawless, J.F.
\newblock \emph{{Statistical models and methods for lifetime data}}.
\newblock Wiley New York, 1982.

\bibitem[Leskovec et~al.(2008)Leskovec, Lang, Dasgupta, and Mahoney]{jure08ncp}
Leskovec, J., Lang, K.~J., Dasgupta, A., and Mahoney, M.~W.
\newblock Statistical properties of community structure in large social and
  information networks.
\newblock In \emph{Proc. of the 17th International Conference on World Wide
  Web}, 2008.

\bibitem[Leskovec et~al.(2010)Leskovec, Chakrabarti, Kleinberg, Faloutsos, and
  Ghahramani]{leskovec2010kronecker}
Leskovec, J., Chakrabarti, D., Kleinberg, J., Faloutsos, C., and Ghahramani, Z.
\newblock {Kronecker graphs: An approach to modeling networks}.
\newblock \emph{The Journal of Machine Learning Research}, 11:\penalty0
  985--1042, 2010.

\bibitem[Meyers \& Leskovec(2010)Meyers and Leskovec]{meyers10netinf}
Meyers, S. and Leskovec, J.
\newblock {On the Convexity of Latent Social Network Inference}.
\newblock In \emph{Advances in Neural Information Processing Systems}, 2010.

\bibitem[Newey \& McFadden(1994)Newey and McFadden]{newey94}
Newey, W.~K. and McFadden, D.~L.
\newblock Large sample estimation and hypothesis testing.
\newblock In \emph{Handbook of Econometrics, Volume IV}, pp.\  2111--2245.
  1994.

\bibitem[Wallinga \& Teunis(2004)Wallinga and Teunis]{wallinga04epidemic}
Wallinga, J. and Teunis, P.
\newblock Different epidemic curves for severe acute respiratory syndrome
  reveal similar impacts of control measures.
\newblock \emph{American Journal of Epidemiology}, 160\penalty0 (6):\penalty0
  509--516, 2004.

\bibitem[Watts \& Dodds(2007)Watts and Dodds]{dodds07influentials}
Watts, Duncan~J. and Dodds, Peter~S.
\newblock Influentials, networks, and public opinion formation.
\newblock \emph{Journal of Consumer Research}, 34\penalty0 (4):\penalty0
  441--458, 2007.

\end{thebibliography}

\end{document}